\documentclass{article}

\usepackage{macro}

\usepackage[most]{tcolorbox}

\AtEveryBibitem{%
    \clearname{editor}%
    \clearname{editora}%
    \clearname{editorb}%
    \clearname{editorc}%
}

\hypersetup{
    colorlinks=true,
    linkcolor=purple,
    citecolor=forestgreen,
}

\title{Capacity of Uniform Noise Channels\\Under Average Input Power Constraints}

\author{
    Yihan Zhang\thanks{
        School of Mathematics, University of Bristol. 
        Email: \href{mailto:yihan.zhang@bristol.ac.uk}{\texttt{yihan.zhang@bristol.ac.uk}}.
    }
}

\begin{document}
\maketitle

\begin{abstract}
The foundational work of Shannon (1948) identified the capacity of an additive noise channel under an average input power constraint as a mutual information maximization problem over input densities subject to a second moment constraint. However, a quantitative understanding of the channel capacity is significantly lacking even for very simple noise distributions beyond Gaussians. In particular, it is a long standing question to determine the capacity of channels with noise uniformly distributed over a centered interval. This paper settles this question by precisely characterizing the capacity and the corresponding capacity achieving input and output distributions of such channels. A key observation en route to these results is a certain periodization identity for the output density of a uniform noise channel which in turn allows for applications of Fourier analytic techniques. 
\end{abstract}



\newcounter{asmpctr} 
\setcounter{asmpctr}{\value{enumi}}


\section{Introduction}

The foundational work of Shannon \cite{Shannon} identified the capacity of a scalar additive channel 
\begin{align}
    Y &= X + Z \label{eqn:channel}
\end{align}
where the input $ X $ and the noise $ Z $ are independent of each other. 
Its capacity under an average input power constraint $\ic>0$ is given by
\begin{align}
    C(\ic,f_Z) &\coloneqq \sup_{f_X : \expt{ X^2 } \le \ic} h(X + Z) - h(Z) , \label{eqn:cap} 
\end{align}
where $ f_X,f_Z $ denote the densities of $X,Z$, respectively. 
When $Z$ is Gaussian, the capacity achieving input (and therefore output) distribution is also Gaussian. 
If one replaces the \emph{average} power constraint $ \expt{X^2}\le\ic $ with an \emph{amplitude} power constraint $ \abs{X}\le\ic $ almost surely, then the capacity achieving input distribution can be discrete or have finite support. 
Understanding the qualitative features of such distributions has been a subject of active research, see e.g.\ \cite{Smith,McKellips,Sharma_Shamai,Huang_Meyn,Thangaraj_Kramer_Bocherer,Dytso_Yagli_Poor_Shamai,Wang_Barletta_Dytso,Barletta_Dytso}. 
Perhaps surprisingly, the capacity of channels with non-Gaussian noise is generally rather poorly understood. 
The special case where $Z$ follows the uniform distribution on an interval and $X$ obeys an amplitude power constraint was solved in \cite[Exercise 7.5]{Gallager} and subsequent studies include \cite{Oettli,Rioul_Magossi,Stapmanns_Dias_Eilers_Pfister,Stapmanns_Eilers_Dias_Kuhn_Pfister}. 
Unfortunately, this line of work has no bearing on channel capacity in the practically more pertinent case of uniform noise and \emph{average} power constraint. 
This question was explicitly labeled as open in \cite[Section VII]{Madiman_Nayar_Tkocz} and some numerical studies can be found in \cite{Letizia_Tonello}. 
There is a sizable literature on estimating the capacity of general additive channels \cite{Shannon,Ihara,Binia,Diggavi_Cover,Zamir_Erez,Verdu,Egan_Perlaza,Egan_Perlaza_Kungurtsev}.
None of these bounds, upon specialized to our setting here, is sharp. 
Many existing results on general additive channels require various regularities of the noise density such as 
tail positivity \cite{Das}, 
everywhere positivity \cite{Fahs_Ajeeb_Abou-Faycal}, 
analyticity \cite{Tchamkerten}, 
analytic extendability \cite{Fahs_Abou-Faycal}, 
Gaussian smoothing \cite{Ranjbar_Tran_Nguyen_Gursoy_Nguyen-Le}, etc. 
The uniform distribution, albeit being very simple, does not possess any of such properties and therefore resists analyses of those types. 
These obstructions were also observed in \cite{Madiman_Nayar_Tkocz}. 

The aim of this paper is to offer an explicit expression of the capacity and the corresponding capacity achieving input and output distributions for the additive channel \Cref{eqn:channel} with noise uniformly distributed on a compact interval centered around the origin. 
The key observation that unlocks these precise characterizations is that for $ Z \sim \unif([-\nc,\nc)) $ and any $X$ independent of $Z$, the periodization of the output density $ f_Y $ is a constant. 
Specifically, consider the lattice $ \Lambda_\nc \coloneqq 2\nc\bbZ = \brace{ 2\nc k : k\in\bbZ } $ and the periodization $ \sfP_{\Lambda_\nc} f_Y $ of $ f_Y $ with respect to $\Lambda_\nc$ defined as the series given by summing the translates of $ f_Y $ by elements in $\Lambda_\nc$: 
\begin{align}
&&
    \sfP_{\Lambda_\nc} f_Y(s) &\coloneqq \sum_{\tau \in \Lambda_\nc} f_Y(s+\tau) , & 
    s &\in [-\nc,\nc) . & 
& \notag 
\end{align}
We show in \Cref{itm:fY_sum} of \Cref{lem:Y} that $ \sfP_{\Lambda_\nc} f_Y $ is identically equal to the constant $ 1/(2\nc) $ on $ \bbR/\Lambda_\nc \cong [-\nc,\nc) $. 
This admits a probabilistic interpretation. 
Denoting by $ [y]_\nc \in [-\nc,\nc) $ the residue of $y\in\bbR$ modulo $ \Lambda_\nc $ and noting that the density of $ [Y]_\nc $ is precisely $ \sfP_{\Lambda_\nc} f_Y $, we have that $ [Y]_\nc \sim \unif([-\nc,\nc)) $. 
This crucial property is satisfied by the output of a uniform noise channel with \emph{any} input. 
Taking this into account allows us to explicitly solve the mutual information maximization problem in \Cref{eqn:cap} which is equivalent to maximizing the output differential entropy. 
Moreover, we show in \Cref{lem:Flam} that the resulting unique capacity achieving input distribution is absolutely continuous (with respect to Lebesgue measure on the real line), in contrast to discrete structures arising from amplitude power constraints \cite{Tchamkerten,Fahs_Abou-Faycal}. 
To the best of our knowledge, the identified capacity achieving input and output distributions are unique to the specific problem at hand and do not seem to naturally arise elsewhere in probability theory. 

\section{Main results}

\paragraph{Notation.}
For a probability density function $f$, we write $ X \sim f $ to mean that the distribution of $X$ has density $f$. 
We denote by $ h(f) $ or $ h(X) $ the differential entropy of $f$, and by $ D(f\Mid g) $ the Kullback--Leibler divergence between $ f,g $. 
All logarithms are to the base $e$. 

We consider the additive channel \Cref{eqn:channel} with $ Z \equiv Z_\nc \sim \unif([-\nc,\nc)) $ where $ \nc>0 $ is fixed. 
In this case, we slightly abuse notation and use $ C(\ic,\nc) $ to denote the corresponding channel capacity \Cref{eqn:cap}. 
Clearly, $ C(\ic,\nc) $ depends on the input and power constraints $ \ic $ and $ \nc $ only through the signal-to-noise ratio $ \ic/\nc^2 $ (noting that $ \expt{Z_\nc^2} = \nc^2/3 $) and one can without loss of generality set $ \ic=1 $. 
However, we keep both parameters generic, following the convention in information theory. 

The main result of this paper, \Cref{thm:cap} below, identifies the capacity and the unique capacity achieving input and output distributions for uniform noise channels under average input power constraints. 
Moreover, the input and output densities are shown to be absolutely continuous. 
For a formal statement of the results, a few definitions are in order. 
For $ y\in\bbR $, let $ [y]_\nc $ be the unique $ s\in [-\nc,\nc) $ such that $ y - s \in 2\nc\bbZ $. 
For $ \lambda>0 $ and $ s\in [-\nc,\nc) $, define 
\begin{align}
    \Theta_{\lambda,\nc}(s) &\coloneqq \sum_{k\in\bbZ} \exp\paren{ -\lambda(s + 2k\nc)^2 } , \label{eqn:Theta} \\
    Q_\nc(\lambda) &\coloneqq 
    - \frac{1}{2\nc} \int_{-\nc}^\nc \frac{\partial}{\partial\lambda} \log( \Theta_{\lambda,\nc}(s) ) \diff s
    = \frac{1}{2\nc} \int_{-\nc}^\nc \frac{1}{\Theta_{\lambda,\nc}(s)} \sum_{k\in\bbZ} (s+2k\nc)^2 \exp\paren{-\lambda (s+2k\nc)^2} \diff s . \label{eqn:Q} 
\end{align}
As shown in \Cref{lem:Q} below, $ Q_\nc $ is a continuous and strictly decreasing function with range $ (\nc^2/3,\infty) $. 
Therefore, for every $ \ic>0 $, the equation 
\begin{align}
    Q_\nc(\lambda) = \ic + \nc^2/3 \label{eqn:lambda}
\end{align}
has a unique solution $ \lambda \equiv \lambda(\ic,\nc)>0 $. 

\begin{theorem}
\label{thm:cap}
Fix $ \ic,\nc>0 $. 
Consider the channel \Cref{eqn:channel} with noise $ Z \equiv Z_\nc \sim \unif([-\nc,\nc)) $ and average input power constraint $ \expt{X^2} \le \ic $. 
Let $ \lambda \equiv \lambda(\ic,\nc) $ be defined through \Cref{eqn:lambda}. 

\begin{enumerate}
    \item \label{itm:thm1} For any $ y\in\bbR $, let
    \begin{align}
        g_{\lambda,\nc}(y) &\coloneqq \frac{\exp\paren{-\lambda y^2}}{2\nc \Theta_{\lambda,\nc}([y]_\nc)} . \label{eqn:g} 
    \end{align}
    Then $ g_{\lambda,\nc} $ is the unique capacity achieving output density. 

    \item \label{itm:thm2} The channel capacity equals 
    \begin{align}
        C(\ic,\nc) &= \lambda (\ic + \nc^2/3) + \frac{1}{2\nc} \int_{-\nc}^\nc \log(\Theta_{\lambda,\nc}(s)) \diff s . \notag 
    \end{align}

    \item \label{itm:thm3} There exists a unique capacity achieving input distribution that is  absolutely continuous with density $ f_X $ defined as follows. 
    Given $ x\in\bbR $, there exists a unique pair $ (s,K)\in[-\nc,\nc)\times\bbZ $ such that $ x - \nc = s + 2K\nc $. 
    Then 
    \begin{align}
        f_X(x) &= \frac{4\lambda \nc}{\Theta_{\lambda,\nc}(s)^2} \sum_{\substack{k\le K \\ j>K}} (j-k) \exp\paren{ -\lambda\brack{(s + 2k\nc)^2 + (s + 2j\nc)^2} } . \label{eqn:fX} 
    \end{align}
    In particular, $ f_X $ is a probability density function satisfying $ \expt{X^2} = \ic $ and $ X + Z_\nc \sim g_{\lambda,\nc} $ where $ X \sim f_X $ is independent of $ Z_\nc $. 
\end{enumerate}
\end{theorem}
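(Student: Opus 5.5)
The plan is to reduce the capacity problem to maximizing $h(Y)$ over output densities with two constraints, and to solve that relaxed problem by a Gibbs/KL argument. The relaxation uses the periodization identity of \Cref{lem:Y}: for any input $X$ independent of $Z_\nc$, $[Y]_\nc\sim\unif([-\nc,\nc))$. In addition, $\expt{Y^2}=\expt{X^2}+\nc^2/3\le \ic+\nc^2/3$. So $C(\ic,\nc)\le \sup\{h(f)-\log(2\nc)\}$ over densities $f$ with $\sfP_{\Lambda_\nc}f\equiv 1/(2\nc)$ and $\int y^2 f\le \ic+\nc^2/3$.

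\textbf{Step 1: solve the relaxed problem.} Take any admissible $f$ and compute
\begin{align}
D(f\Mid g_{\lambda,\nc}) = -h(f) + \lambda\int y^2 f(y)\diff y + \int f(y)\log\bigl(2\nc\Theta_{\lambda,\nc}([y]_\nc)\bigr)\diff y . \notag
\end{align}
The last integral depends on $f$ only through the law of $[Y]_\nc$, which is uniform. It therefore equals $\log(2\nc)+\frac1{2\nc}\int_{-\nc}^\nc\log\Theta_{\lambda,\nc}(s)\diff s$.

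Since $D\ge0$ and $\lambda>0$, we get
\begin{align}
h(f)\le \lambda(\ic+\nc^2/3)+\log(2\nc)+\frac{1}{2\nc}\int_{-\nc}^\nc\log\Theta_{\lambda,\nc}(s)\diff s. \notag
\end{align}
Equality holds iff $f=g_{\lambda,\nc}$ and the second moment constraint is tight.

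It remains to check that $g_{\lambda,\nc}$ is itself admissible.
\begin{itemize}
\item Its periodization is $\Theta_{\lambda,\nc}(s)/(2\nc\Theta_{\lambda,\nc}(s))=1/(2\nc)$, by direct summation.
\item Its second moment equals $Q_\nc(\lambda)$. Summing over translates shows $\int y^2 g=\frac1{2\nc}\int_{-\nc}^{\nc}\frac{\sum_k(s+2k\nc)^2e^{-\lambda(s+2k\nc)^2}}{\Theta_{\lambda,\nc}(s)}\diff s$, which by \Cref{eqn:lambda} is exactly $\ic+\nc^2/3$.
\end{itemize}
Subtracting $h(Z_\nc)=\log(2\nc)$ gives \Cref{itm:thm2} as an upper bound. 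It also shows that any capacity achieving output density must equal $g_{\lambda,\nc}$, since $h$ is strictly concave and the bound is attained only there. This gives \Cref{itm:thm1}, provided the bound is achievable.

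\textbf{Step 2: achievability.} This step is \Cref{itm:thm3} and is the main obstacle: we must exhibit an input density $f_X\ge0$ with $f_X*\frac{1}{2\nc}\mathbf 1_{[-\nc,\nc)}=g_{\lambda,\nc}$. The convolution identity $f_Y(y)=\frac1{2\nc}\int_{y-\nc}^{y+\nc}f_X$ differentiates to
\begin{align}
g'(y)=\frac{1}{2\nc}\bigl(f_X(y+\nc)-f_X(y-\nc)\bigr). \notag
\end{align}
So $f_X(x)=-2\nc\sum_{m\ge 0}g'(x-\nc+2m\nc)$, equivalently $2\nc\sum_{m\ge1}g'(x-\nc-2m\nc)$, telescoping from $-\infty$ where $f_X$ vanishes.

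With $x-\nc=s+2K\nc$, the term $\Theta_{\lambda,\nc}([y]_\nc)=\Theta_{\lambda,\nc}(s)$ is constant along the lattice translates. Hence
\begin{align}
g'(s+2j\nc)=\frac{1}{2\nc}\,\frac{-2\lambda(s+2j\nc)e^{-\lambda(s+2j\nc)^2}\Theta_{\lambda,\nc}(s)-e^{-\lambda(s+2j\nc)^2}\Theta'_{\lambda,\nc}(s)}{\Theta_{\lambda,\nc}(s)^2}. \notag
\end{align}
Summing over $j>K$ and expanding $\Theta\cdot(\cdot)$ and $\Theta'$ as double sums over $k\in\bbZ$ gives $\frac{2\lambda}{\Theta^2}\sum_{j>K,\,k}\bigl((s+2k\nc)-(s+2j\nc)\bigr)e^{-\lambda[(s+2k\nc)^2+(s+2j\nc)^2]}$. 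Each term equals $2\nc(k-j)$ times the Gaussian weight. The pairs with both $k,j>K$ cancel by antisymmetry, leaving $k\le K<j$. After multiplying by the sign and $2\nc$, this yields \Cref{eqn:fX}, which is manifestly nonnegative.

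Finally, check the remaining claims:
\begin{itemize}
\item Absolute continuity is immediate from the formula.
\item Convergence and the decay of $f_X$ at $\pm\infty$, needed to justify the telescoping, follow from Gaussian tails.
\item $\int f_X=1$ and $f_X*\unif=g$ hold because the difference of the two sides has zero derivative and vanishes at infinity.
\item $\expt{X^2}=\expt{Y^2}-\nc^2/3=\ic$.
\end{itemize}

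Uniqueness of the input follows from uniqueness of the output together with injectivity of convolution with $\unif([-\nc,\nc))$ on inputs with finite second moment. For injectivity I would use the difference relation above: two inputs with the same output have a difference that is $2\nc$-periodic and integrable, hence zero. The delicate part is handling possibly non-absolutely-continuous inputs, which I would treat through distribution functions, since $F_X(y+\nc)-F_X(y-\nc)=2\nc f_Y(y)$.
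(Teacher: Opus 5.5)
Your proposal is correct and follows essentially the paper's route. It uses the same Gibbs/KL bound built on the periodization identity. For the input, you telescope $g'$ at the density level (legitimate because the $2\nc$-periodic extension of $\Theta_{\lambda,\nc}$ is smooth, so $g_{\lambda,\nc}$ is smooth on $\bbR$), whereas the paper telescopes $g_{\lambda,\nc}$ into the distribution function $F_\lambda$ of \Cref{eqn:Flambda} and then differentiates; both land on the same antisymmetric cancellation leaving $k\le K<j$. Input uniqueness rests, in both, on the same telescoping/periodicity argument applied to distribution functions.

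One index slip: the telescoped formulas should read $f_X(x)=2\nc\sum_{m\ge0}g'(x-\nc-2m\nc)=-2\nc\sum_{m\ge1}g'(x-\nc+2m\nc)$. Both of the expressions you display equal $f_X(x-2\nc)$. Your subsequent sum over $j>K$ already uses the correct range, so you still recover \Cref{eqn:fX}.
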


The unique capacity achieving input and output distributions $ f_X $ in \Cref{eqn:fX} and $ g_{\lambda,\nc} $ in \Cref{eqn:g} are plotted in \Cref{fig:fg} for several values of $ \nc $, assuming $ \ic = 1 $. 
These densities appear visually Gaussian like. 
This is not surprising given the result of \cite{Zamir_Erez} saying that for any additive channel under an average power constraint, a Gaussian input achieves a rate no more than $0.5$ bits below capacity. 
Also, \cite{Madiman_Nayar_Tkocz} showed that the capacity of any additive channel with symmetric log-concave noise (including the uniform noise) under an average power constraint is at most $0.254$ bits larger than the capacity of a Gaussian channel (whose capacity achieving input distribution is well known to be Gaussian) with matching noise variance. 
The above being said, the significance of \Cref{thm:cap} is that it characterizes the precise input and output densities which are not exactly Gaussians. 

\begin{figure}[htbp]
    \centering
    \begin{subfigure}{0.49\textwidth}
        \centering
        \includegraphics[width=\textwidth]{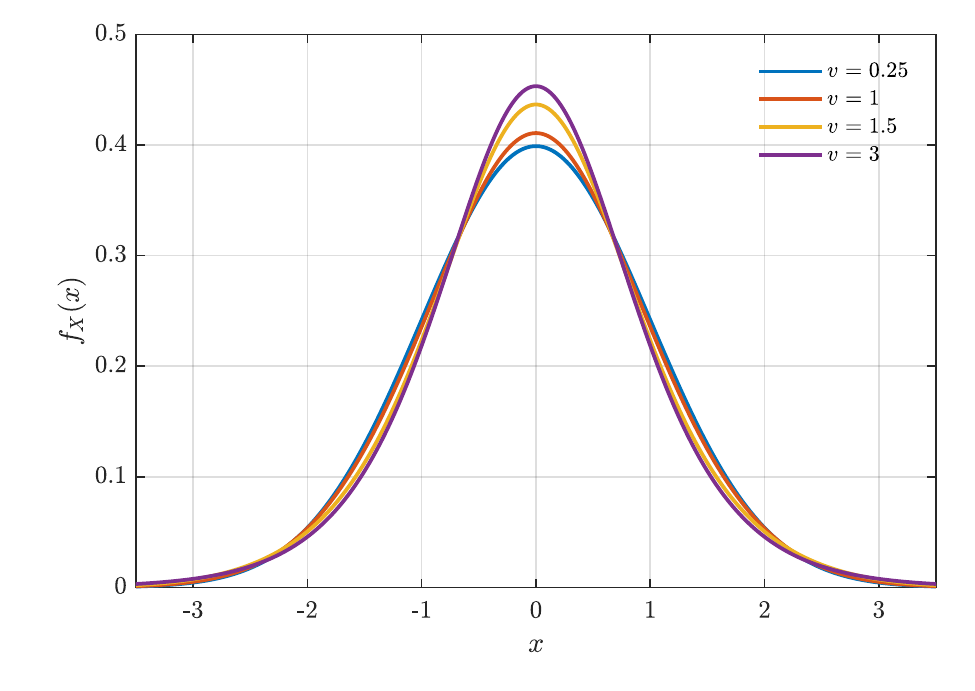}
    \end{subfigure}
    \begin{subfigure}{0.49\textwidth}
        \centering
        \includegraphics[width=\textwidth]{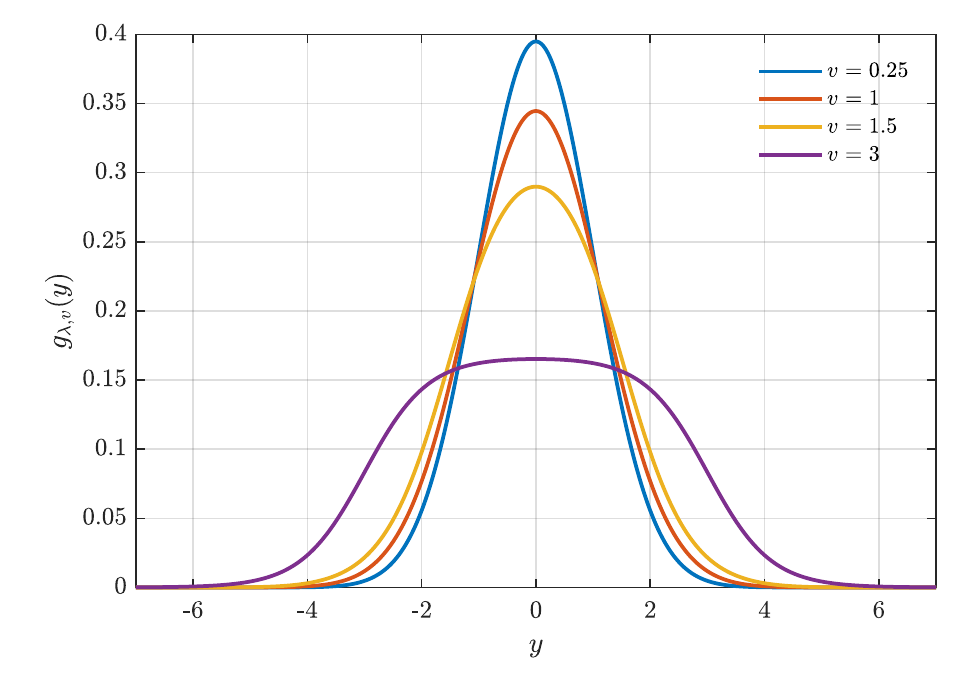}
    \end{subfigure}
    \caption{Plots of the unique capacity achieving input and output distributions $ f_X $ in \Cref{eqn:fX} and $ g_{\lambda,\nc} $ in \Cref{eqn:g} for $ \nc\in\brace{0.25,1,1.5,3} $, assuming $ \ic=1 $.}
    \label{fig:fg}
\end{figure}

\section{Discussion}

This paper settles the long standing question of determining the capacity and capacity achieving input and output distributions of channels with additive uniform noise under average power constraints.
The input and output distributions identified in \Cref{thm:cap} are shown to be absolutely continuous. 
We leave it for future work to study their qualitative features such as higher-order regularities, Gaussian approximability, other extremal properties, etc. 

More broadly, we still do not have a satisfactory quantitative understanding of the capacity and capacity achieving input and output distributions of general additive channels. 
We take this opportunity to repeat the question raised in \cite[Question 1]{Madiman_Nayar_Tkocz}: among all symmetric log-concave noise distributions with fixed variance, which distributions maximize the capacity of the corresponding additive channels?
The authors of \cite{Madiman_Nayar_Tkocz} showed that the uniform distribution minimizes entropy among all symmetric log-concave distributions and speculated that it is a candidate extremizer for the above question. 
We hope that the detailed study of uniform noise channels in this paper can drive us closer to the answer of this question. 

\section*{Acknowledgements}
The author thanks Tomasz Tkocz for making him aware of the tantalizing question of determining the capacity of uniform noise channels in 2018. 

\section{Proofs}
\label{sec:pf}

We first collect a few observations on the convolution of any input density $ f_X $ with the uniform density $ f_{Z_\nc} $. 
The simple yet crucial periodization identity \Cref{eqn:fY} will be exploited in the remainder of the proofs. 

\begin{lemma}
\label{lem:Y}
Let $ Y = X + Z_\nc $ where $ Z_\nc \sim \unif([-\nc,\nc)) $ and $ X $ is a real-valued random variable, independent of $ Z_\nc $, with cumulative distribution function $ F_X $. 
Then the following results hold. 
\begin{enumerate}
    \item \label{itm:fY} The density of $Y$ is given by 
    \begin{align}
        f_Y(y) &= \frac{1}{2\nc} \paren{ F_X(y+\nc) - F_X(y-\nc) } . \notag 
    \end{align}

    \item \label{itm:fY_sum} For any $ s\in[-\nc,\nc) $, 
    \begin{align}
        \sum_{k\in\bbZ} f_Y(s + 2k\nc) &= \frac{1}{2\nc} . \label{eqn:fY} 
    \end{align}

    \item \label{itm:EY2} If we further assume that $X$ has finite second moment, then 
    \begin{align}
        \expt{Y^2} &= \expt{X^2} + \nc^2/3 . \notag 
    \end{align}
\end{enumerate}
\end{lemma}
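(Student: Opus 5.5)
We prove the three items in order, each by a direct computation using only independence and the explicit uniform density $f_{Z_\nc} = \frac{1}{2\nc}\mathbf{1}_{[-\nc,\nc)}$.

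For \Cref{itm:fY}, we compute the distribution function of $Y$ by conditioning on $Z_\nc$. By Fubini,
\begin{align}
    \Pr(Y \le y) = \frac{1}{2\nc}\int_{-\nc}^{\nc} F_X(y-z)\diff z = \frac{1}{2\nc}\int_{y-\nc}^{y+\nc} F_X(t)\diff t . \notag
\end{align}
The function $F_X$ is bounded and monotone, so this is Lipschitz in $y$ and hence absolutely continuous. Its derivative equals $\frac{1}{2\nc}(F_X(y+\nc)-F_X(y-\nc))$ at every $y$ where $F_X$ is continuous at both $y\pm\nc$, which is all but countably many $y$. Therefore this expression is a version of the density of $Y$. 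Equivalently, one may note that $f_Y(y) = \frac{1}{2\nc}\Pr(X\in(y-\nc,y+\nc])$ up to a null set; we take the right-continuous version stated in the lemma.

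For \Cref{itm:fY_sum}, we plug \Cref{itm:fY} into the sum and telescope. Write $y_k = s+2k\nc$. Then $y_k+\nc = y_{k+1}-\nc$, so
\begin{align}
    \sum_{k=-M}^{N} f_Y(y_k) = \frac{1}{2\nc}\sum_{k=-M}^{N}\paren{F_X(y_{k+1}-\nc) - F_X(y_k-\nc)} = \frac{1}{2\nc}\paren{F_X(y_{N+1}-\nc) - F_X(y_{-M}-\nc)} . \notag
\end{align}
Letting $M,N\to\infty$, we have $F_X(y_{N+1}-\nc)\to 1$ and $F_X(y_{-M}-\nc)\to 0$. All terms are nonnegative, so the series converges unconditionally and the limit is $1/(2\nc)$. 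This holds for every $s$, not merely almost every $s$, because we used the pointwise formula from \Cref{itm:fY}. The only subtle point in the whole lemma is fixing this version of the density so that the identity holds pointwise; the telescoping itself is immediate.

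For \Cref{itm:EY2}, we expand $\expt{Y^2} = \expt{X^2} + 2\expt{X}\expt{Z_\nc} + \expt{Z_\nc^2}$ using independence. Finite second moment gives integrability of $X$, and $\expt{Z_\nc}=0$. It then remains to evaluate $\expt{Z_\nc^2} = \frac{1}{2\nc}\int_{-\nc}^{\nc} z^2\diff z = \nc^2/3$.
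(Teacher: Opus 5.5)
Your proof is correct and follows essentially the same route as the paper. The only difference is in item 1: the paper writes $f_Y(y)=\expt{f_{Z_\nc}(y-X)}=\frac{1}{2\nc}\prob{y-\nc<X\le y+\nc}$ directly, whereas you differentiate the distribution function. Your telescoping argument for item 2 and your independence computation for item 3 match the paper's, and your remark that the pointwise identity in item 2 depends on fixing this version of the density makes explicit something the paper leaves implicit.
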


\begin{proof}
For \Cref{itm:fY}, 
\begin{align}
    f_Y(y)
    &= \expt{ f_{Z_\nc}(y - X) }
    = \frac{1}{2\nc} \expt{\one_{[-\nc,\nc)}(y - X)} \notag \\
    &= \frac{1}{2\nc} \prob{ y-\nc < X \le y+\nc }
    = \frac{1}{2\nc} (F_X(y+\nc) - F_X(y-\nc)) . \notag 
\end{align}

For \Cref{itm:fY_sum}, let us first use \Cref{itm:fY} to compute the sum truncated at $ \pm N $ for some $ N\in\bbZ_{>0} $:  
\begin{align}
    \sum_{k=-N}^N f_Y(s + 2k\nc)
    &= \frac{1}{2\nc} \sum_{k=-N}^N \paren{ F_X(s+2k\nc + \nc) - F_X(s+2k\nc - \nc) } \notag \\
    &= \frac{1}{2\nc} \sum_{k=-N}^N \paren{ F_X(s+2(k+1)\nc - \nc) - F_X(s+2k\nc - \nc) } \notag \\
    &= \frac{1}{2\nc} \paren{ F_X(s+2(N+1)\nc - \nc) - F_X(s-2N\nc - \nc) } , \notag 
\end{align}
since it is a telescoping sum. 
Now sending $ N\to\infty $, we have $ F_X(s+2(N+1)\nc - \nc) \to 1 $ and $ F_X(s-2N\nc - \nc) \to 0 $, which establishes the result. 

Finally, \Cref{itm:EY2} immediately follows from independence between $ X,Z_\nc $ and the fact $ \expt{Z_\nc^2} = \nc^2/3 $. 
\end{proof}

The next lemma establishes certain regularity properties of the function $ Q_\nc $. 
In particular, it implies the existence and uniqueness of solution to the equation \Cref{eqn:lambda} for any $ \ic>0 $. 

\begin{lemma}
\label{lem:Q}
For any $\nc>0$, the function $ Q_\nc $ defined in \Cref{eqn:Q} is continuous and strictly decreasing on $ (0,\infty) $, and satisfies 
\begin{align}
&&
    \lim_{\lambda\downarrow0} Q_\nc(\lambda) &= \infty , & 
    \lim_{\lambda\to\infty} Q_\nc(\lambda) &= \nc^2/3 . & 
& \notag 
\end{align}
\end{lemma}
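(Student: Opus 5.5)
The plan is to read $Q_\nc$ probabilistically, as the average of Gibbs second moments, and get monotonicity from a variance identity. For fixed $s\in[-\nc,\nc)$ and $\lambda>0$, let $T_{\lambda,s}$ take the value $t_k\coloneqq s+2k\nc$ ($k\in\bbZ$) with probability $p_k\coloneqq \exp(-\lambda t_k^2)/\Theta_{\lambda,\nc}(s)$. Then \Cref{eqn:Q} reads
\begin{align}
    Q_\nc(\lambda) = \frac{1}{2\nc}\int_{-\nc}^\nc \expt{T_{\lambda,s}^2}\diff s ,\notag
\end{align}
and differentiating the log-partition function once more gives
\begin{align}
    \frac{\partial}{\partial\lambda}\expt{T_{\lambda,s}^2} = -\mathrm{Var}\paren{T_{\lambda,s}^2} .\notag
\end{align}

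\textbf{Regularity and differentiation under the integral.} Fix a compact interval $\lambda\in[a,b]$ with $a>0$. Uniformly in $s\in[-\nc,\nc)$ and $\lambda\in[a,b]$, we have $\Theta_{\lambda,\nc}(s)\ge \exp(-b\nc^2)$, using the term $k=0$. Each summand $t_k^{2m}\exp(-\lambda t_k^2)$ with $m\le 2$ is bounded by $C(1+|k|)^4\exp(-a\nc^2(2|k|-1)^2)$, which is summable in $k$. Hence the series, together with their $\lambda$-derivatives, converge uniformly. This gives continuity of the integrand in $(\lambda,s)$ and lets us differentiate under both the sum and the integral, so
\begin{align}
    Q_\nc'(\lambda) = -\frac{1}{2\nc}\int_{-\nc}^\nc \mathrm{Var}\paren{T_{\lambda,s}^2}\diff s .\notag
\end{align}
The variance is strictly positive for every $s$, because $T^2_{\lambda,s}$ puts positive mass on at least two distinct values: $t_0^2=s^2<t_1^2$ when $s\ge0$, and $t_0^2<t_{-1}^2$ when $s<0$. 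Therefore $Q_\nc$ is continuous and strictly decreasing.

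\textbf{Limit as $\lambda\to\infty$.} For $s\in(-\nc,\nc)$ we have $|t_k|>|s|$ for all $k\ne0$, so $p_0\to1$ and $\expt{T_{\lambda,s}^2}\to s^2$. To apply dominated convergence, note that for $\lambda\ge1$,
\begin{align}
    \expt{T_{\lambda,s}^2}\le \sum_k t_k^2\exp\paren{-\lambda(t_k^2-s^2)}\le \sum_k t_k^2\exp\paren{-(t_k^2-s^2)} ,\notag
\end{align}
since $t_k^2\ge s^2$ for every $k$. The right-hand side is bounded uniformly in $s$, because $t_k^2-s^2\ge 4\nc^2(|k|-1)^2$ for $|k|\ge 1$. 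Dominated convergence then gives the limit $\frac{1}{2\nc}\int_{-\nc}^{\nc} s^2\diff s=\nc^2/3$.

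\textbf{Limit as $\lambda\downarrow0$.} This is the step I expect to be most delicate. Rather than using Poisson summation, I would argue by concavity. Set
\begin{align}
    F(\lambda)\coloneqq -\frac{1}{2\nc}\int_{-\nc}^{\nc}\log\Theta_{\lambda,\nc}(s)\diff s ,\notag
\end{align}
so that $F'=Q_\nc$ by the differentiation above. Comparing the sum with a Riemann sum of the unimodal function $t\mapsto e^{-\lambda t^2}$ on the grid $s+2\nc\bbZ$ gives
\begin{align}
    \Theta_{\lambda,\nc}(s)\ge \frac{1}{2\nc}\sqrt{\pi/\lambda}-1\notag
\end{align}
uniformly in $s$, so $F(\lambda)\to-\infty$ as $\lambda\downarrow0$. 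Now suppose, for contradiction, that $Q_\nc\le M$ on $(0,1]$; since $Q_\nc$ is decreasing, boundedness at small $\lambda$ is the only way the limit could fail to be $+\infty$. Then $F(\lambda)=F(1)-\int_\lambda^1 Q_\nc \ge F(1)-M$ for every $\lambda\in(0,1]$, contradicting $F\to-\infty$. Hence $\lim_{\lambda\downarrow0}Q_\nc(\lambda)=\infty$.
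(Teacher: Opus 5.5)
Your proof is correct. The monotonicity and continuity parts match the paper's: the same Gibbs-measure reading of $Q_\nc$, the identity that the $\lambda$-derivative of the Gibbs second moment is minus the variance of $T_{\lambda,s}^2$, and uniform convergence on $[a,b]\times[-\nc,\nc)$ with $\Theta_{\lambda,\nc}\ge e^{-b\nc^2}$. One small slip: your summand bound only holds for $|k|\ge 1$, which is all you need. Your $\lambda\to\infty$ step is the paper's concentration argument, but you supply the dominating function via $t_k^2-s^2\ge 4\nc^2(|k|-1)^2$, which the paper leaves implicit.

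The genuine difference is the limit $\lambda\downarrow 0$.
\begin{itemize}
\item The paper uses Poisson summation to write $\Theta_{\lambda,\nc}(s)=(2\nc)^{-1}\sqrt{\pi/\lambda}\,(1+R_\lambda(s))$, where $R_\lambda$ and $\partial_\lambda R_\lambda$ are of order $e^{-\pi^2/(4\nc^2\lambda)}$. This gives the sharp asymptotic $Q_\nc(\lambda)=\frac{1}{2\lambda}+O(\lambda^{-2}e^{-\pi^2/(4\nc^2\lambda)})$. So in the high-SNR regime the optimal output's second moment is Gaussian up to an exponentially small error.
\item You note that $Q_\nc=F'$ for the concave function $F=-\frac{1}{2\nc}\int_{-\nc}^{\nc}\log\Theta_{\lambda,\nc}(s)\,ds$. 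You get $F\to-\infty$ from the elementary Riemann-sum bound $\Theta_{\lambda,\nc}(s)\ge\frac{1}{2\nc}\sqrt{\pi/\lambda}-1$, and conclude by monotonicity.
\end{itemize}
Your route is Fourier-free and reuses the differentiation under the integral you already justified. It is only qualitative, though: monotonicity gives $Q_\nc(\lambda)\ge F(1)-F(\lambda)$, so you get $Q_\nc(\lambda)\ge\frac12\log(1/\lambda)-O(1)$, far from the true $1/(2\lambda)$ rate. For what the lemma is used for, existence and uniqueness of the solution to \Cref{eqn:lambda}, either argument suffices.
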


\begin{proof}
Denote 
\begin{align}
&&
    a_k(s) &\coloneqq s + 2k\nc , & 
    p_{k,\lambda}(s) &\coloneqq \frac{1}{\Theta_{\lambda,\nc}(s)} \exp\paren{ -\lambda a_k(s)^2 } , &
& \label{eqn:pk} 
\end{align}
where we suppress the dependence on $ \nc $ in notation since $ \nc $ is fixed throughout. 
Then $ p_{\cdot,\lambda}(s) $ is the probability mass function of a probability distribution on $ \bbZ $. 
A straightforward calculation shows
\begin{align}
    \frac{\partial}{\partial\lambda} p_{k,\lambda}(s)
    &= - \brack{
        \frac{a_k(s)^2 e^{-\lambda a_k(s)^2}}{\Theta_{\lambda,\nc}(s)}
        - \frac{e^{-\lambda a_k(s)^2} \sum_{j\in\bbZ} a_j(s)^2 e^{-\lambda a_j(s)^2}}{\Theta_{\lambda,\nc}(s)^2}
    } \notag \\
    &= - \brack{
        a_k(s)^2 p_{k,\lambda}(s)
        - p_{k,\lambda}(s) \sum_{j\in\bbZ} a_j(s)^2 p_{j,\lambda}(s)
    } . \notag 
\end{align}
Using this, we have
\begin{align}
    \frac{\partial}{\partial\lambda} \sum_{k\in\bbZ} a_k(s)^2 p_{k,\lambda}(s)
    = - \brack{
        \sum_{k\in\bbZ} a_k(s)^4 p_{k,\lambda}(s)
        - \paren{
            \sum_{k\in\bbZ} a_k(s)^2 p_{k,\lambda}(s)
        }^2
    }
    = - \var{ a_\bk(s)^2 } , \notag 
\end{align}
where $ \bk \sim p_{\cdot,\lambda}(s) $. 
This quantity is strictly negative for every $ s\in[-\nc,\nc) $. 
Since $ Q_\nc $ can be written as 
\begin{align}
    Q_\nc(\lambda) &= \frac{1}{2\nc} \int_{-\nc}^\nc \sum_{k\in\bbZ} a_k(s)^2 p_{k,\lambda}(s) \diff s , \notag 
\end{align}
we conclude the monotonicity of $ Q_\nc $. 

Next, we prove continuity of $ Q_\nc $. 
Recall $ \Theta_{\lambda,\nc} $ from \Cref{eqn:Theta} and define also 
\begin{align}
    \Omega_{\lambda,\nc}(s) &\coloneqq \sum_{k\in\bbZ} (s+2k\nc)^2 e^{-\lambda(s+2k\nc)^2} . \notag 
\end{align}
Then 
\begin{align}
    Q_\nc(\lambda) &= \frac{1}{2\nc} \int_{-\nc}^\nc \frac{\Omega_{\lambda,\nc}(s)}{\Theta_{\lambda,\nc}(s)} \diff s . \label{eqn:QQ} 
\end{align}
Note that for $ s\in[-\nc,\nc) $ and $ \abs{k}\ge1 $, we have $ \abs{s + 2k\nc} \ge \nc (2\abs{k} - 1) $. 
Consider a compact interval $ [a,b] \subset (0,\infty) $. 
Then for all $ \lambda\in[a,b] $ and $ \abs{k}\ge1 $, 
\begin{align}
    e^{-\lambda(s+2k\nc)^2} &\le \exp\paren{ - a \nc^2 (2\abs{k} - 1)^2 } , \notag 
\end{align}
the right-hand side of which is summable over $ k\in\bbZ $. 
Therefore, $ \Theta_{\lambda,\nc}(s) $ converges uniformly in $ (\lambda,s) \in [a,b] \times [-\nc,\nc) $. 
A similar argument shows that $ \Omega_{\lambda,\nc}(s) $ also uniformly converges. 
Moreover, since each summand in $ \Theta_{\lambda,\nc}(s) $ and $ \Omega_{\lambda,\nc}(s) $ is continuous in $ (\lambda,s) $, $ \Theta_{\lambda,\nc}(s) $ and $ \Omega_{\lambda,\nc}(s) $ are also continuous. 
The division by $ \Theta_{\lambda,\nc}(s) $ in \Cref{eqn:QQ} is legit since $ \Theta_{\lambda,\nc}(s) $ is bounded away from zero uniformly over all $ (\lambda,s)\in[a,b]\times[-\nc,\nc) $. 
The latter can be seen by only looking at the term $k=0$: 
\begin{align}
    \Theta_{\lambda,\nc}(s) &\ge e^{-\lambda s^2} \ge e^{-b\nc^2} > 0 . \notag 
\end{align}
Therefore, $ (\lambda,s) \mapsto \Omega_{\lambda,\nc}(s) / \Theta_{\lambda,\nc}(s) $ is continuous and bounded uniformly on $ [a,b] \times [-\nc,\nc) $. 
Integrating this over $ s\in[-\nc,\nc) $ implies that $ Q_\nc $ is continuous on $ (0,\infty) $. 

As $ \lambda\to\infty $, the mass of $ p_{\cdot,\lambda} $ concentrates on points in the lattice $ \brace{a_k(s) : k\in\bbZ} = s + 2\nc \bbZ $ whose distance to $ 0 $ is minimized. 
For $ s\in(-\nc,\nc) $, the minimizer is unique and is given by $a_k(s) = s$. 
If $ s=-\nc $, the minimizer is not unique, but this case can be ignored since it does not contribute to the integral in the definition \Cref{eqn:Q} of $ Q_\nc $. 
Therefore, 
\begin{align}
    \lim_{\lambda\to\infty} Q_\nc(\lambda) &= \frac{1}{2\nc} \int_{-\nc}^\nc s^2 \diff s 
    = \nc^2/3 . \notag 
\end{align}

To study the $ \lambda\downarrow0 $ limit of $ Q_\nc $, let us first estimate $ \Theta_{\lambda,\nc} $ for small $ \lambda>0 $ using Fourier series (a.k.a.\ Poisson resummation upon periodizing $ \Theta_{\lambda,\nc} $ as below). 
Extending the domain of $ \Theta_{\lambda,\nc} $ from $ [-\nc,\nc) $ to $ \bbR $, we obtain a $ 2\nc $-periodic function, i.e., $ \Theta_{\lambda,\nc}(s+2\nc) = \Theta_{\lambda,\nc}(s) $ for any $ s\in\bbR $. 
The Fourier coefficient of $ \Theta_{\lambda,\nc} $ at frequency $ n\in\bbZ $ can then be computed as 
\begin{align}
    \wh{\Theta}_{\lambda,\nc}(n) 
    &= \frac{1}{2\nc} \int_0^{2\nc} \Theta_{\lambda,\nc}(s) \exp\paren{-\frac{2\pi\ii n s}{2\nc}} \diff s \notag \\
    &= \frac{1}{2\nc} \int_0^{2\nc} \sum_{k\in\bbZ} \exp\paren{-\lambda(s+2k\nc)^2} \exp\paren{-\frac{2\pi\ii n s}{2\nc}} \diff s \notag \\
    &= \frac{1}{2\nc} \sum_{k\in\bbZ} \int_{2k\nc}^{2(k+1)\nc} \exp\paren{-\lambda x^2} \exp\paren{-\frac{2\pi\ii n}{2\nc} (x - 2k\nc)} \diff x \label{eqn:change} \\
    &= \frac{1}{2\nc} \int_{-\infty}^\infty \exp\paren{-\lambda x^2} \exp\paren{ - \ii n \frac{\pi x}{\nc} } \diff x \label{eqn:phase} \\
    &= \frac{1}{2\nc} \sqrt{\frac{\pi}{\lambda}} \exp\paren{-\frac{\pi^2 n^2}{4\lambda \nc^2}} , \label{eqn:CF}
\end{align}
where \Cref{eqn:change} is obtained from the change of variable $ s+2k\nc \mapsto x $; \Cref{eqn:phase} follows since $ \exp\paren{2\pi\ii nk} = 1 $; \Cref{eqn:CF} is by the formula of the characteristic function of $ \cN(0,\sigma^2) $ evaluated at $ n\in\bbR $:
\begin{align}
    \int_{-\infty}^\infty \frac{e^{-\frac{x^2}{2\sigma^2}}}{\sqrt{2\pi\sigma^2}} e^{-\ii n x} \diff x
    &= e^{-n^2 \sigma^2/2} . \notag 
\end{align}
Summing up all phases modulated by these coefficients, we obtain the Fourier series of $ \Theta_{\lambda,\nc} $: 
\begin{align}
    \Theta_{\lambda,\nc}(s) &= \sum_{n\in\bbZ} \wh{\Theta}_{\lambda,\nc}(n) \exp\paren{ \frac{2\pi\ii n s}{2\nc} } \notag \\
    &= \frac{1}{2\nc} \sqrt{\frac{\pi}{\lambda}} \sum_{n\in\bbZ} \exp\paren{-\frac{\pi^2 n^2}{4\lambda \nc^2}} \exp\paren{ \frac{2\pi\ii n s}{2\nc} } \notag \\
    &= \frac{1}{2\nc} \sqrt{\frac{\pi}{\lambda}} \sum_{n\in\bbZ} \exp\paren{-\frac{\pi^2 n^2}{4\lambda \nc^2}} \cos\paren{ \frac{2\pi n s}{2\nc} } \label{eqn:real} \\
    &= \frac{1}{2\nc} \sqrt{\frac{\pi}{\lambda}} \brack{ 1 + 2 \sum_{n=1}^\infty \exp\paren{-\frac{\pi^2 n^2}{4\lambda \nc^2}} \cos\paren{ \frac{\pi n s}{\nc} } } , \label{eqn:Theta_TODO} 
\end{align}
where \Cref{eqn:real} follows since $ e^{\ii\theta} = \cos(\theta) + \ii \sin(\theta) $ and the imaginary part can be removed since $ \Theta_{\lambda,\nc} $ is real-valued. 
Denoting 
\begin{align}
&&
    q_\lambda &\coloneqq \exp\paren{-\frac{\pi^2}{4\lambda\nc^2}} , & 
    R_\lambda(s) &\coloneqq 2 \sum_{n=1}^\infty q_\lambda^{n^2} \cos\paren{ \frac{\pi n s}{\nc} } , & 
& \notag 
\end{align}
let us estimate $ R_\lambda(s) $ and its derivative with respect to $\lambda$. 
Taking $ \lambda>0 $ to be sufficiently small, we can make sure $ q_\lambda \in [0,1/8] $. 
Then we have the following estimate: 
\begin{align}
    \abs{ R_\lambda(s) }
    &= 2 \abs{ \sum_{n=1}^\infty q_\lambda^{n^2} \cos\paren{ \frac{\pi n s}{\nc} } } 
    \le 2 \sum_{n=1}^\infty q_\lambda^{n^2} 
    \le 2 \sum_{n=1}^\infty q_\lambda^n
    = 2 \frac{q_\lambda}{1 - q_\lambda}
    \le 4 q_\lambda \le 1/2 , \notag 
\end{align}
and consequently, 
\begin{align}
    1 + R_\lambda(s) &\in [1/2,3/2] . \label{eqn:R} 
\end{align}
Similarly, 
\begin{align}
    \abs{\frac{\partial}{\partial\lambda} R_\lambda(s)}
    &= \frac{\pi^2}{2\nc^2\lambda^2} \abs{ \sum_{n=1}^\infty n^2 q_\lambda^{n^2} \cos\paren{\frac{\pi n s}{\nc}} }
    \le \frac{\pi^2}{2\nc^2\lambda^2} \abs{ \sum_{n=1}^\infty n^2 q_\lambda^n }
    = \frac{\pi^2}{2\nc^2\lambda^2} \frac{q_\lambda (1+q_\lambda)}{(1-q_\lambda)^3}
    \le \frac{6\pi^2}{\nc^2 \lambda^2} q_\lambda . \label{eqn:R'} 
\end{align}
Since $ \Theta_{\lambda,\nc}(s) = (2\nc)^{-1} \sqrt{\pi/\lambda} \, \paren{ 1 + R_\lambda(s) } $, using \Cref{eqn:R,eqn:R'}, we have
\begin{align}
    \frac{\partial}{\partial\lambda} \log(\Theta_{\lambda,\nc}(s))
    &= -\frac{1}{2\lambda} + \frac{\frac{\partial}{\partial\lambda} R_\lambda(s)}{1 + R_\lambda(s)}
    = -\frac{1}{2\lambda} + O\paren{\frac{q_\lambda}{\lambda^2}}
    = -\frac{1}{2\lambda} + O\paren{\frac{e^{-c/\lambda}}{\lambda^2}} , \notag 
\end{align}
where $ c \coloneqq \pi^2/(4\nc^2) $. 
Recalling the definition \Cref{eqn:Q} of $ Q_\nc $, we then have $ Q_\nc(\lambda) = (2\lambda)^{-1} (1+o(1)) $. 
Therefore, $ Q_\nc(\lambda) \to \infty $ as $ \lambda\downarrow0 $. 
\end{proof}

Since the periodization identity \Cref{eqn:fY} is satisfied by any output density, taking this into account allows us to explicitly solve the output entropy maximization problem \Cref{eqn:entmax} in the next lemma which, due to additivity of channel noise, is equivalent to the mutual information maximization problem in \Cref{eqn:cap}. 
This produces the unique capacity achieving output distribution. 

\begin{lemma}
\label{lem:entmax}
Let $ q>\nc^2/3 $ and let $ \lambda>0 $ be the unique solution to $ Q_\nc(\lambda) = q $. 
Then the following constrained entropy maximization problem 
\begin{align}
    \max\brace{ h(f) : 
        \int y^2 f(y) \diff y \le q , \ 
        \sum_{k\in\bbZ} f(s+2k\nc) = \frac{1}{2\nc} \ \textnormal{ for a.e.\ } s\in [-\nc,\nc)
    } \label{eqn:entmax} 
\end{align}
has a unique maximizer given by $ g_{\lambda,\nc} $ defined in \Cref{eqn:g} and the corresponding maximum entropy equals 
\begin{align}
    h(g_{\lambda,\nc}) &= \lambda q + \log(2\nc) + \frac{1}{2\nc} \int_{-\nc}^\nc \log(\Theta_{\lambda,\nc}(s)) \diff s . \label{eqn:hg} 
\end{align}
\end{lemma}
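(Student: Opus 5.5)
The plan is the standard Gibbs/relative-entropy argument, adapted to the periodization constraint. First I will check that $g \equiv g_{\lambda,\nc}$ is feasible. Positivity and measurability are clear. For the periodization constraint, note that $[s+2k\nc]_\nc = s$ for $s\in[-\nc,\nc)$, so
\begin{align}
    \sum_{k\in\bbZ} g(s+2k\nc) = \frac{1}{2\nc\Theta_{\lambda,\nc}(s)} \sum_{k\in\bbZ} e^{-\lambda(s+2k\nc)^2} = \frac{1}{2\nc} . \notag
\end{align}
This holds for every $s$, and integrating over $s\in[-\nc,\nc)$ together with Tonelli shows that $\int g = 1$. For the second moment, the same unfolding (Tonelli, with $y = s+2k\nc$) gives
\begin{align}
    \int y^2 g(y)\diff y = \frac{1}{2\nc}\int_{-\nc}^{\nc} \frac{\Omega_{\lambda,\nc}(s)}{\Theta_{\lambda,\nc}(s)}\diff s = Q_\nc(\lambda) = q \notag
\end{align}
by \Cref{eqn:QQ}. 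So the moment constraint is active.

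Next I will compute $h(g)$ directly. We have $-\log g(y) = \lambda y^2 + \log(2\nc) + \log\Theta_{\lambda,\nc}([y]_\nc)$, hence
\begin{align}
    h(g) = \lambda q + \log(2\nc) + \int \log\Theta_{\lambda,\nc}([y]_\nc)\, g(y)\diff y . \notag
\end{align}
The key point is that the last term depends on $g$ only through its periodization. Unfolding the integral over the cosets $s+2\nc\bbZ$ and using the periodization identity, it equals $\int_{-\nc}^{\nc}\log\Theta_{\lambda,\nc}(s)\cdot\frac{1}{2\nc}\diff s$. This yields \Cref{eqn:hg}. The integrability of $\log\Theta$ follows because $e^{-\lambda\nc^2}\le\Theta_{\lambda,\nc}(s)\le$ a bounded constant on $[-\nc,\nc)$, as in the proof of \Cref{lem:Q}.

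For optimality and uniqueness, take any feasible $f$ with $h(f)$ well defined and greater than $-\infty$ (otherwise there is nothing to prove). I will compute $D(f\Mid g) = -h(f) - \int f\log g$. Repeating the calculation above for $f$ gives
\begin{align}
    -\int f\log g = \lambda\int y^2 f + \log(2\nc) + \int_{-\nc}^{\nc}\log\Theta_{\lambda,\nc}(s)\sum_k f(s+2k\nc)\diff s . \notag
\end{align}
The last term again equals $\frac{1}{2\nc}\int\log\Theta_{\lambda,\nc}$, precisely because $f$ satisfies the a.e.\ periodization constraint. Using $\int y^2 f\le q$ and $\lambda>0$, this gives $-\int f\log g \le h(g)$, hence
\begin{align}
    0\le D(f\Mid g)\le h(g)-h(f) . \notag
\end{align}
So $h(f)\le h(g)$, with equality only if $D(f\Mid g)=0$, i.e.\ $f=g$ a.e. This proves both the maximality and the uniqueness of $g$.

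The main obstacle is the measure-theoretic bookkeeping in this step. The decomposition $h(f) = -\int f\log g - D(f\Mid g)$ requires that $\int f\log g$ be finite. This holds because $\abs{\log g(y)}\le \lambda y^2 + C$ and $f$ has a finite second moment. The rest of the decomposition is standard once $\int f\abs{\log f}$ is handled via the usual convention that $h(f)$ exists. I would also take care with the Tonelli/Fubini unfolding $\int_\bbR \phi([y]_\nc) f(y)\diff y = \int_{-\nc}^{\nc}\phi(s)\sum_k f(s+2k\nc)\diff s$, which is valid since $\phi=\log\Theta$ is bounded.
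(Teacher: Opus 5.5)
Your proposal is correct and follows essentially the same route as the paper: Gibbs' inequality $D(f\|g)\ge 0$ with $g=g_{\lambda,\nc}$ as the reference density. You split $-\log g$ into the quadratic term, bounded by $\lambda q$ via the moment constraint, and the $\log\Theta_{\lambda,\nc}([y]_\nc)$ term, unfolded over residues modulo $2\nc$ and evaluated using the periodization constraint, and you check directly that $g$ is feasible with $\int y^2 g = Q_\nc(\lambda) = q$. Beyond this, you supply the integrability bookkeeping the paper leaves implicit (boundedness of $\log\Theta_{\lambda,\nc}$ on $[-\nc,\nc)$, $|\log g(y)|\le\lambda y^2+C$, and Fubini for the unfolding), which is sound and makes the argument fully rigorous.
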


\begin{proof}
Let $ f $ be a probability density function satisfying the constraints in \Cref{eqn:entmax}. 
By nonnegativity of KL divergence, 
\begin{align}
    0 &\le D(f \Mid g_{\lambda,\nc})
    = \int f(y) \log(f(y)) \diff y - \int f(y) \log(g_{\lambda,\nc}(y)) \diff y . \label{eqn:D} 
\end{align}
This implies 
\begin{align}
    h(f) &\le - \int f(y) \log(g_{\lambda,\nc}(y)) \diff y \notag \\
    &= \lambda \int y^2 f(y) \diff y
    + \log(2\nc)
    + \int f(y) \log(\Theta_{\lambda,\nc}([y]_\nc)) \diff y \label{eqn:1} \\
    &\le \lambda q
    + \log(2\nc)
    + \sum_{k\in\bbZ} \int_{-\nc}^\nc f(s+2k\nc) \log(\Theta_{\lambda,\nc}([s+2k\nc]_\nc)) \diff s \label{eqn:2} \\
    &= \lambda q
    + \log(2\nc)
    + \int_{-\nc}^\nc \paren{ \sum_{k\in\bbZ} f(s+2k\nc) } \log(\Theta_{\lambda,\nc}(s)) \diff s \label{eqn:4} \\
    &= \lambda q
    + \log(2\nc)
    + \frac{1}{2\nc} \int_{-\nc}^\nc \log(\Theta_{\lambda,\nc}(s)) \diff s . \label{eqn:3} 
\end{align}
Here, \Cref{eqn:1} follows since $ \log(g_{\lambda,\nc}(y)) = -\lambda y^2 - \log(2\nc) - \log(\Theta_{\lambda,\nc}([y]_\nc)) $ by the definition \Cref{eqn:g} of $ g_{\lambda,\nc} $; 
in \Cref{eqn:2}, we use the first constraint in \Cref{eqn:entmax} on $ f $ and the assumption $\lambda>0$ to upper bound the first integral from \Cref{eqn:1} by $ \lambda q $, we also split $ \bbR $ into the disjoint union of intervals $ \brace{[s+2k\nc-\nc,s+2k\nc+\nc) : k\in\bbZ} $ for the second integral from \Cref{eqn:1}; 
\Cref{eqn:4} holds since $ [s+2k\nc]_\nc = s $ for any $ s\in[-\nc,\nc) $ and $ k\in\bbZ $; 
\Cref{eqn:3} follows from the second constraint in \Cref{eqn:entmax} on $f$. 

The inequality in \Cref{eqn:D} becomes an equality if and only if $ f = g_{\lambda,\nc} $ almost everywhere. 
To show that the upper bound \Cref{eqn:3} can be achieved, it remains to verify that $ g_{\lambda,\nc} $ is a probability density function satisfying both constraints in \Cref{eqn:entmax}. 
Indeed, 
\begin{align}
    \sum_{k\in\bbZ} g_{\lambda,\nc}(s+2k\nc)
    &= \frac{1}{2\nc} \sum_{k\in\bbZ} \frac{e^{-\lambda(s+2k\nc)^2}}{\Theta_{\lambda,\nc}([s+2k\nc]_\nc)} 
    = \frac{1}{2\nc} \cdot \frac{1}{\Theta_{\lambda,\nc}(s)} \sum_{k\in\bbZ} e^{-\lambda(s+2k\nc)^2}
    = \frac{1}{2\nc} , \notag 
\end{align}
which also implies
\begin{align}
    \int g_{\lambda,\nc}(y) \diff y
    &= \int_{-\nc}^\nc \sum_{k\in\bbZ} g_{\lambda,\nc}(s+2k\nc) \diff s
    = 1 . \notag 
\end{align}
Moreover, 
\begin{align}
    \int y^2 g_{\lambda,\nc}(y) \diff y
    &= \sum_{k\in\bbZ} \int_{-\nc}^\nc (s+2k\nc)^2 g_{\lambda,\nc}(s+2k\nc) \diff s
    = \sum_{k\in\bbZ} \int_{-\nc}^\nc (s+2k\nc)^2 \frac{e^{-\lambda(s+2k\nc)^2}}{2\nc \Theta_{\lambda,\nc}(s)} \diff s
    = Q_\nc(\lambda)
    = q , \notag 
\end{align}
where the last equality is by hypothesis. 
This completes the proof. 
\end{proof}

The following lemma verifies that the pullback (by the channel action) of the output density $ g_{\lambda,\nc} $ obtained in \Cref{lem:entmax} is precisely $ f_X $, thereby identifying the unique capacity achieving input distribution. 

\begin{lemma}
\label{lem:Flam}
Fix $ \lambda>0 $. 
For any $ x\in\bbR $, 
\begin{align}
    F_\lambda(x) &\coloneqq 2\nc \sum_{j=0}^\infty g_{\lambda,\nc}(x - \nc - 2j\nc) \label{eqn:Flambda} 
\end{align}
is an absolutely continuous cumulative distribution function whose corresponding probability density function is given by $ f_X $ defined in \Cref{eqn:fX}. 
Moreover, for any $ y\in\bbR $, 
\begin{align}
    \frac{1}{2\nc} \paren{ F_\lambda(y + \nc) - F_\lambda(y - \nc) }
    &= g_{\lambda,\nc}(y) . \notag 
\end{align}
\end{lemma}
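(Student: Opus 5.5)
The key identity is a telescoping one. Since $F_\lambda(y+\nc) = 2\nc\sum_{j\ge0} g_{\lambda,\nc}(y-2j\nc)$ and $F_\lambda(y-\nc) = 2\nc\sum_{j\ge1} g_{\lambda,\nc}(y-2j\nc)$, their difference is exactly $2\nc\, g_{\lambda,\nc}(y)$. Both series converge absolutely because $g_{\lambda,\nc}(y)\le e^{-\lambda y^2}/(2\nc\, e^{-\lambda\nc^2})$, using $\Theta_{\lambda,\nc}(s)\ge e^{-\lambda s^2}\ge e^{-\lambda \nc^2}$. This proves the second claim at once. It remains to show that $F_\lambda$ is a CDF with density $f_X$.

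\textbf{Limits.} First I would write $x-\nc = s+2K\nc$ with $s\in[-\nc,\nc)$, so that $[x-\nc-2j\nc]_\nc = s$ for every $j$. Then
\begin{align}
F_\lambda(x) = \frac{1}{\Theta_{\lambda,\nc}(s)}\sum_{k\le K} e^{-\lambda(s+2k\nc)^2}, \notag
\end{align}
i.e.\ $F_\lambda(x)=\prob{\bk\le K}$ with $\bk\sim p_{\cdot,\lambda}(s)$ as in \Cref{eqn:pk}. From this form $F_\lambda\in[0,1]$. As $x\to\pm\infty$ we have $K\to\pm\infty$, and the Gaussian tail bounds are uniform in $s\in[-\nc,\nc)$, so $F_\lambda\to1$ and $F_\lambda\to0$ respectively.

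\textbf{Continuity across lattice points.} When $x$ crosses a lattice point, $s$ jumps from $\nc^-$ to $-\nc$ and $K$ increases by one. Reindexing $k\mapsto k+1$ shows that both the sum and $\Theta$ are continuous there, since $s+2K\nc$ is continuous in $x$ and $\Theta_{\lambda,\nc}$ is $2\nc$-periodic. So $F_\lambda$ is continuous on $\bbR$.

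\textbf{Derivative.} On each open interval where $K$ is fixed, I differentiate in $s$ (with $ds=dx$), justified by termwise differentiation under the uniform convergence argument of \Cref{lem:Q}. Write $e_k=e^{-\lambda a_k^2}$ and $a_k=s+2k\nc$. The quotient rule gives
\begin{align}
\Theta^2 F' = \sum_{k\le K}\sum_{j\in\bbZ}(-2\lambda)(a_k-a_j)e_ke_j. \notag
\end{align}
The pairs with $j\le K$ cancel by antisymmetry, leaving
\begin{align}
\sum_{k\le K<j}2\lambda(a_j-a_k)e_ke_j = \sum_{k\le K<j}4\lambda\nc(j-k)e_ke_j, \notag
\end{align}
which is exactly \Cref{eqn:fX}. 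In particular $F'\ge0$, so $F_\lambda$ is nondecreasing.

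\textbf{Absolute continuity.} $F_\lambda$ is continuous and piecewise $C^1$, with finitely many break points in any bounded interval, so it is absolutely continuous and $F_\lambda(x)=\int_{-\infty}^x f_X$. Hence $f_X$ is a probability density.

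The main obstacle is the bookkeeping of the lattice reparametrization: checking continuity at the breakpoints, and justifying termwise differentiation of the double sum, which needs uniform summability of $|j-k|e_ke_j$. That summability follows from Gaussian decay in both indices.
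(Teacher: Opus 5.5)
Your proposal is correct and follows essentially the same route as the paper: the representation $F_\lambda(x)=\sum_{k\le K}p_{k,\lambda}(s)$, the quotient-rule derivative in which the pairs with $j\le K$ cancel by antisymmetry, and the telescoping identity for the second claim. The deviations are minor and sound. Your continuity argument is cleaner than the paper's termwise limits: the reindexed sum and the $2\nc$-periodic $\Theta_{\lambda,\nc}$ depend on $x$ only through $x-\nc$, which in fact makes $F_\lambda$ smooth on all of $\bbR$. You also get absolute continuity directly from piecewise $C^1$ regularity, instead of checking, as the paper does, that $F_\lambda'$ is continuous across the lattice points.
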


\begin{proof}
For any $x\in\bbR$, there exists a unique pair $ (s,K)\in[-\nc,\nc)\times\bbZ $ such that $ x - \nc = s + 2K\nc $. 
Recalling the notation $ a_k, p_{k,\lambda} $ from \Cref{eqn:pk} and the definition \Cref{eqn:g} of $ g_{\lambda,\nc} $, we have 
\begin{align}
    F_\lambda(x) &= \sum_{j\ge0} \frac{e^{-\lambda(x - \nc - 2j\nc)^2}}{\Theta_{\lambda,\nc}([x - \nc - 2j\nc]_\nc)}
    = \sum_{j\ge0} \frac{e^{-\lambda(s + 2(K - j) \nc)^2}}{\Theta_{\lambda,\nc}([s + 2(K - j)\nc]_\nc)}
    = \sum_{k\le K} \frac{e^{-\lambda(s+2k\nc)^2}}{\Theta_{\lambda,\nc}(s)}
    = \sum_{k\le K} p_{k,\lambda}(s) . \label{eqn:Flam} 
\end{align}
Since $ K\to\pm\infty $ as $ x\to\pm\infty $, it is easy to see that the right-hand side above converges to $ 0 $ as $ x\to-\infty $ and to $ 1 $ as $ x\to\infty $. 

To show continuity of $ F_\lambda $, we only need to consider inputs around the lattice points $ x = 2(K+1)\nc $ for some $ K\in\bbZ $. 
It suffices to show that 
\begin{align}
    \lim_{\eps\downarrow0} F_\lambda(x + \eps) - F_\lambda(x - \eps) &= 0 . \notag 
\end{align}
We decompose $ x-\nc\pm\eps $ into the sum of a lattice point in $ 2\nc\bbZ $ and a residual in $ [-\nc,\nc) $: 
\begin{align}
&&
    x - \nc - \eps &= 2K\nc + (\nc-\eps) , &
    x - \nc + \eps &= 2(K+1)\nc - (\nc-\eps) . & 
& \label{eqn:decomp}
\end{align}
Then by the representation \Cref{eqn:Flam} of $ F_\lambda $, 
\begin{align}
    F_\lambda(x + \eps) - F_\lambda(x - \eps)
    &= \sum_{k\le K+1} p_{k,\lambda}(-(\nc - \eps)) - \sum_{k\le K} p_{k,\lambda}(\nc-\eps) \notag \\
    &= \sum_{k\le K} ( p_{k+1,\lambda}(-(\nc-\eps)) - p_{k,\lambda}(\nc-\eps) ) . \label{eqn:F-F} 
\end{align}
Each term converges to zero as $\eps\downarrow0$: 
\begin{align}
    \lim_{\eps\downarrow0} \, (p_{k+1,\lambda}(-(\nc-\eps)) - p_{k,\lambda}(\nc-\eps))
    &= \lim_{\eps\downarrow0} \paren{ \frac{e^{-\lambda(-(\nc-\eps) + 2(k+1)\nc)^2}}{\Theta_{\lambda,\nc}(-(\nc-\eps) + 2(k+1)\nc)} - \frac{e^{-\lambda(\nc-\eps + 2k\nc)^2}}{\Theta_{\lambda,\nc}(\nc-\eps+2k\nc)} } \notag \\
    &= \lim_{\eps\downarrow0} \paren{ \frac{e^{-\lambda(\nc+\eps + 2k\nc)^2}}{\Theta_{\lambda,\nc}(\nc+\eps + 2k\nc)} - \frac{e^{-\lambda(\nc-\eps + 2k\nc)^2}}{\Theta_{\lambda,\nc}(\nc-\eps+2k\nc)} } 
    = 0 , \label{eqn:p-p} 
\end{align}
by continuity of $ \Theta_{\lambda,\nc}(\cdot) $. 
Therefore, 
\begin{align}
    \lim_{\eps\downarrow0} \, ( F_\lambda(x+\eps) - F_\lambda(x-\eps) ) &= 0 , \notag 
\end{align}
which, in view of \Cref{eqn:F-F}, implies the continuity of $ F_\lambda $. 

We also need to check monotonicity of $ F_\lambda $. 
For $ x \notin 2\nc\bbZ $, there is a unique pair $ (s,K)\in(-\nc,\nc)\times\bbZ $ such that $ x-\nc = s + 2K\nc $. 
In this case, a straightforward calculation shows
\begin{align}
    p_{k,\lambda}'(s)
    &= \frac{-2\lambda a_k(s) e^{-\lambda a_k(s)^2}}{\Theta_{\lambda,\nc}(s)}
    +2 \lambda \frac{e^{-\lambda a_k(s)^2}}{\Theta_{\lambda,\nc}(s)^2} \sum_{j\in\bbZ} e^{-\lambda a_j(s)^2} a_j(s) \notag \\
    &= -2\lambda p_{k,\lambda}(s) \paren{ a_k(s) - \sum_{j\in\bbZ} a_j(s) p_{j,\lambda}(s) } . \notag 
\end{align}
Then by the representation \Cref{eqn:Flam}, we have: 
\begin{align}
    F_\lambda'(x) &= \sum_{k\le K} p_{k,\lambda}'(s)
    = -2\lambda \paren{ \sum_{k\le K} a_k(s) p_{k,\lambda}(s) - \sum_{\substack{k\le K \\ j\in\bbZ}} a_j(s) p_{k,\lambda}(s) p_{j,\lambda}(s) } \notag \\
    &= -2\lambda \paren{ \sum_{\substack{k\le K\\j\in\bbZ}} a_k(s) p_{k,\lambda}(s) p_{j,\lambda}(s) - \sum_{\substack{k\le K \\ j\in\bbZ}} a_j(s) p_{k,\lambda}(s) p_{j,\lambda}(s) }
    \label{eqn:sum1} \\
    &= -2\lambda \sum_{\substack{k\le K \\ j>K}} (a_k(s) - a_j(s)) p_{k,\lambda}(s) p_{j,\lambda}(s)
    \label{eqn:sum0} \\
    &= 4\lambda\nc \sum_{\substack{k\le K\\ j>K}} (j-k) p_{k,\lambda}(s) p_{j,\lambda}(s) , \label{eqn:deriv} 
\end{align}
where \Cref{eqn:sum1} holds since $ \sum_{j\in\bbZ} p_{j,\lambda}(s) = 1 $; 
\Cref{eqn:sum0} follows since 
\begin{align}
    \sum_{\substack{k\le K\\j\le K}} (a_k(s) - a_j(s)) p_{k,\lambda}(s) p_{j,\lambda}(s) &= 0 . \notag 
\end{align}
Note that the right-hand side of \Cref{eqn:deriv} above is obviously nonnegative. 

Next, we show that the derivative at any lattice point $ x = 2(K+1)\nc $ is in fact continuous. 
Again consider $ x-\nc\pm\eps $ which admit the decompositions \Cref{eqn:decomp}. 
By the result \Cref{eqn:deriv} just established, the derivative at $ x-\eps $ is 
\begin{align}
    F_\lambda'(x-\eps)
    &= 4\lambda\nc \sum_{\substack{k\le K \\ j>K}} (j-k) p_{k,\lambda}(\nc-\eps) p_{j,\lambda}(\nc-\eps) , \notag 
\end{align}
and the derivative at $ x+\eps $ is 
\begin{align}
    F_\lambda'(x+\eps)
    &= 4\lambda\nc \sum_{\substack{k\le K+1 \\ j>K+1}} (j-k) p_{k,\lambda}(-(\nc-\eps)) p_{j,\lambda}(-(\nc-\eps)) \notag \\
    &= 4\lambda\nc \sum_{\substack{k\le K \\ j>K}} (j-k) p_{k+1,\lambda}(-(\nc-\eps)) p_{j+1,\lambda}(-(\nc-\eps)) 
    . \notag 
\end{align}
Therefore, by \Cref{eqn:p-p}, 
\begin{align}
    \lim_{\eps\downarrow0} \, (F_\lambda'(x-\eps) - F_\lambda'(x+\eps)) &= 0 . \notag 
\end{align}
Combining this with \Cref{eqn:deriv} and comparing the result with \Cref{eqn:fX}, we conclude that $ F_\lambda $ is continuously differentiable and its derivative is precisely $ f_X $ defined in \Cref{eqn:fX}. 

Finally, we compute
\begin{align}
    \frac{1}{2\nc} (F_\lambda(y+\nc) - F_\lambda(y-\nc))
    &= \sum_{j=0}^\infty g_{\lambda,\nc}(y - 2j\nc) - \sum_{j=0}^\infty g_{\lambda,\nc}(y - 2\nc - 2j\nc) \notag \\
    &= \sum_{j=0}^\infty g_{\lambda,\nc}(y - 2j\nc) - \sum_{j=1}^\infty g_{\lambda,\nc}(y - 2j\nc) 
    = g_{\lambda,\nc}(y) , \notag 
\end{align}
which completes the proof. 
\end{proof}

Equipped with the preceding lemmas, we are ready to complete the proof of \Cref{thm:cap}. 

\begin{proof}[Proof of \Cref{thm:cap}.]
Let $ X $ be any real-valued random variable satisfying $ \expt{X^2} \le \ic $. 
Let $ Y = X + Z_\nc $ for $ Z_\nc \sim \unif([-\nc,\nc)) $ independent of $ X $. 
By \Cref{lem:Y}, the output density $ f_Y $ must satisfy the identity \Cref{eqn:fY} and have second moment at most $ q \coloneqq \ic + \nc^2/3 $. 
Let $ \lambda>0 $ be the unique solution to $ Q_\nc(\lambda) = q $ (existence and uniqueness of solution is guaranteed by \Cref{lem:Q}). 
Then \Cref{lem:entmax} ensures
\begin{align}
    h(Y) &\le \lambda q + \log(2\nc) + \frac{1}{2\nc} \int_{-\nc}^\nc \log(\Theta_{\lambda,\nc}(s)) \diff s . \notag 
\end{align}
It is easy to verify that $ h(Z_\nc) = \log(2\nc) $. 
Therefore, 
\begin{align}
    I(X;Y) &= h(Y) - h(Z_\nc)
    \le \lambda q + \frac{1}{2\nc} \int_{-\nc}^\nc \log(\Theta_{\lambda,\nc}(s)) \diff s . \notag 
\end{align}

On the other hand, we claim that the above upper bound can be attained by the input density $ f_X $ defined in \Cref{eqn:fX}. 
Let $ X_\star \sim f_X $. 
By \Cref{lem:Flam}, the cumulative distribution function of $ X_\star $ is $ F_\lambda $. 
Let $ Z_\nc \sim \unif([-\nc,\nc)) $ be independent of $ X_\star $. 
Then by \Cref{itm:fY} of \Cref{lem:Y}, the density of $ Y_\star \coloneqq X_\star + Z_\nc $ equals 
\begin{align}
    f_{Y_\star}(y) 
    &= \frac{1}{2\nc} \paren{ F_\lambda(y+\nc) - F_\lambda(y-\nc) }
    = g_{\lambda,\nc}(y) , \label{eqn:fY=g} 
\end{align}
where the last equality follows from \Cref{lem:Flam}. 
We also check that $ f_X $ satisfies the power constraint. 
To see this, note that \Cref{lem:entmax} ensures 
\begin{align}
    \expt{Y_\star^2} &= \int y^2 g_{\lambda,\nc}(y) \diff y = q . \notag 
\end{align}
Since $ \expt{Y_\star^2} = \expt{X_\star^2} + \expt{Z_\nc^2} $, this implies $ \expt{X_\star^2} = q-\nc^2/3 = \ic $. 
This proves \Cref{itm:thm3}. 

Combining \Cref{eqn:fY=g,eqn:hg}, we have 
\begin{align}
    I(X_\star;Y_\star)
    &= h(Y_\star) - h(Z_\nc)
    = h(g_{\lambda,\nc}) - \log(2\nc)
    = \lambda q + \frac{1}{2\nc} \int_{-\nc}^\nc \log(\Theta_{\lambda,\nc}(s)) \diff s , \notag 
\end{align}
which proves \Cref{itm:thm2}. 

Finally, in view of \Cref{eqn:fY=g}, uniqueness of the capacity achieving output density $ f_{Y_\star} $ in \Cref{itm:thm1} follows from uniqueness of the maximizer $ g_{\lambda,\nc} $ for \Cref{eqn:entmax}. 
To show uniqueness of $ f_X $, take any capacity achieving input distribution $ F_X $. 
Then its pushforward under the channel action must be the unique output density $ g_{\lambda,\nc} $. 
By \Cref{itm:fY} of \Cref{lem:Y}, 
\begin{align}
    F_X(x) - F_X(x - 2N\nc) &= 2\nc g_{\lambda,\nc}(x - \nc) . \notag 
\end{align}
For any $ N\in\bbZ_{>0} $, consider the telescoping sum: 
\begin{align}
    F_X(x) - F_X(x - 2N\nc)
    &= \sum_{k = 0}^{N-1} F_X(x-2k\nc) - F_X(x-2k\nc-2\nc)
    = 2\nc \sum_{k=0}^{N-1} g_{\lambda,\nc}(x-2k\nc-\nc) . \notag 
\end{align}
Sending $N\to\infty$, we have that $ F_X(x - 2N\nc) $ on the left-hand side above vanishes and the right-hand side tends to $ F_\lambda(x) $ in \Cref{eqn:Flambda}. 
Therefore $ F_X = F_\lambda $, which, by \Cref{lem:Flam}, implies that the density of $ F_X $ must be $ f_X $. 
This completes the proof. 
\end{proof}



\renewcommand*{\bibfont}{\normalfont\small}
\printbibliography


\end{document}